\documentclass[final,copyright,creativecommons]{eptcs}

\usepackage{soul}
\usepackage{wrapfig}
\usepackage{cite}
\usepackage{multicol}
\usepackage[all]{xy}
\usepackage{prooftree}
\usepackage{amssymb}
\usepackage{amsmath}
\usepackage{amsfonts}
\usepackage{amsthm}
\usepackage{float}
\usepackage{stmaryrd}
\floatstyle{boxed}\restylefloat{figure}
\hypersetup{
  colorlinks=true,
  citecolor=blue,
  linkcolor=black,
  urlcolor=black,
  pdftitle={Non determinism through type isomorphism},
  pdfauthor={Alejandro D\'iaz-Caro and Gilles Dowek}
}

\newcommand{\OurCalculus}{\ensuremath{\lambda_+}}
\newcommand{\eq}{\ensuremath{\rightleftarrows}}
\newcommand{\re}{\ensuremath{\hookrightarrow}}
\newcommand{\subst}[2]{\ensuremath{[{#1}/{#2}]}}
\newcommand{\tf}{\mbox{\bf T\!F}}
\newcommand{\true}{\mbox{\bf T}}
\newcommand{\false}{\mbox{\bf F}}
\newcommand{\ve}[1]{\ensuremath{\mathrm{\textbf{#1}}}}
\newcommand{\cond}[1]{\ensuremath{\mbox{\small$[#1]\,$}}}

\newtheorem{theorem}{Theorem}[section]
\newtheorem{lemma}[theorem]{Lemma}

\theoremstyle{definition}
\newtheorem{example}[theorem]{Example}

\title{Non determinism through type isomorphism}
\author{Alejandro D\'iaz-Caro\thanks{This work was supported by grants from DIGITEO and R\'egion \^Ile-de-France.}
\institute{Universit\'e Paris 13, Sorbonne Paris Cit\'e, LIPN\\
Universit\'e Paris-Ouest Nanterre La D\'efense\\
INRIA}
\and Gilles Dowek
\institute{INRIA\\
23 avenue d'Italie, CS 81321,\\ 75214 Paris Cedex 13}
}

\begin{document}
\maketitle

\begin{abstract}
We define an equivalence relation on propositions and a proof system
where equivalent propositions have the same proofs.  The system
obtained this way resembles several known non-deterministic and
algebraic lambda-calculi.
\end{abstract}

\section{Introduction}
Several non-deterministic extensions to the $\lambda$-calculus have been proposed,
e.g.~\cite{BoudolIC94,BucciarelliEhrhardManzonettoAPAL12,deLiguoroPipernoIC95,DezaniciancaglinideliguoroPipernoTCS96,DezaniciancagliniDeliguoroPipernoSIAM98,PaganiRonchidellaroccaFI10}. In these approaches, the parallel composition (sometimes called the
{\em must-convergent} parallel composition)
is such that if $\ve r$ and $\ve s$ are two $\lambda$-terms, the term $\ve r+\ve s$ (also written $\ve r\parallel\ve s$) represents the computation that runs either $\ve r$ or $\ve s$ non-deterministically.
It is common to consider in these approaches the associativity and commutativity of the operator $+$. Indeed the interpretation ``either $\ve r$ or $\ve s$ runs'' shall not prioritise any of them, and so ``either $\ve s$ or $\ve r$ runs'' must be represented by the same term. Moreover, $(\ve r+\ve s)\ve t$ can run either $\ve r\ve t$ or $\ve s\ve t$, which is the same expressed by $\ve r\ve t+\ve s\ve t$. Extra equivalences (or rewrite rules, depending on the presentation) are set up to account for such an interpretation, e.g.~$(\ve r+\ve s)\ve t\leftrightarrow\ve r\ve t+\ve s\ve t$. This right distributivity can alternatively be seen as the one of function sum: $(\ve f+\ve g)(x)$ is defined pointwise as $\ve f(x)+\ve g(x)$. This is the approach of the algebraic lambda-calculi~\cite{ArrighiDowekRTA08,VauxMSCS09}, two independently introduced algebraic extensions which resulted strongly related afterwards~\cite{AssafPerdrixDCM11,DiazcaroPerdrixTassonValironHOR10}. In these algebraic calculi, a scalar pondering each `choice' is considered in addition to the sum of terms.

Because of these equivalences between terms, it is natural to think that a typed version must allow some equivalences at the type level. Definitely, if $\ve r$ and $\ve s$ are typed with types $A$ and $B$ respectively, it is natural to expect that whatever connective tie these types in order to type $\ve r+\ve s$, it must be commutative and associative.

An independent stream of research is the study of isomorphisms between types for several languages (see \cite{DiCosmo95} for a reference).
For example, we know that the propositions $A \wedge B$ and $B \wedge A$ are equiprovable: one is provable if and only if the other is, but they do not have the same proofs. If $\ve r$ is a proof of $A$ and $\ve s$ is a proof of $B$, then $\langle\ve r,\ve s\rangle$ is a proof of $A\wedge B$ while $\langle\ve s,\ve r\rangle$ is a proof of $B\wedge A$. Despite that both proofs can be derived from the same hypotheses, they are not the same.
In this paper, we show how the non-determinism arises naturally in a classic context only by introducing some equivalences between types. These equivalences, nevertheless, will be chosen among valid, well-known isomorphisms.
In order to consider these isomorphic types as equivalent, we need to design a proof system such that they have the same proofs, or conversely, in order to consider these terms to be equivalent, we need to make these isomorphic types to be equivalent. Formally, two types $A$ and $B$ are isomorphic if there are two conversion functions $f$ of type $A\Rightarrow B$ and $g$ of type $B\Rightarrow A$, such that $g(f(x))=x$ for any $x$ of type $A$ and $f(g(y))=y$ for any $y$ of type $B$.
Hence, in this system the conversion functions $f$ and $g$ should become and identity function. In other words, we take the quotient of the set of propositions by the relation generated by the isomorphisms of types and define proofs for elements in this quotient. In System F with products, which correspond to the propositional logic with universal quantifier, conjunction and implication, the full list of isomorphisms is known~\cite{DiCosmo95}, and it is summarised in Figure~\ref{fig:iso}.
\begin{figure}[!h]
\begin{multicols}{2}
 \begin{enumerate}
  \item\label{iso:comm} $A\wedge B\equiv B\wedge A$
  \item\label{iso:asso} $A\wedge (B\wedge C)\equiv (A\wedge B)\wedge C$
  \item\label{iso:distrib} $A\Rightarrow(B\wedge C)\equiv (A\Rightarrow B)\wedge(A\Rightarrow C)$
  \item\label{iso:currying} $(A\wedge B)\Rightarrow C\equiv A\Rightarrow (B\Rightarrow C)$
  \item\label{iso:ordering} $A\Rightarrow (B\Rightarrow C)\equiv B\Rightarrow(A\Rightarrow C)$
  \item $\forall X.\forall Y.A\equiv \forall Y.\forall X.A$
  \item $\forall X.A\equiv \forall Y.A[Y/X]$
  \item $\forall X.(A\Rightarrow B)\equiv A\Rightarrow \forall X.B$ if $X\notin FV(A)$
  \item $\forall X.(A\wedge B)\equiv \forall X.A\wedge\forall X.B$
  \item $\forall X.(A\wedge B)\equiv \forall X.\forall Y.(A\wedge (B[Y/X]))$
 \end{enumerate}
\end{multicols}
 \caption{All the type isomorphisms in propositional logic with universal quantifier, non-idempotent conjunction and implication}
 \label{fig:iso}
\end{figure}

In this work, we consider only the three first isomorphisms of this list,
because they are those that arise naturally when studying non deterministic
processes. The impact of the others is left for future work.

\begin{wrapfigure}{r}{0.1\textwidth}\vspace{-0.7cm}
$$\prooftree A\qquad B
  \justifies A \wedge B
  \endprooftree$$
\end{wrapfigure}Usually, for the deduction rule on the right
if we call $\ve r$ the proof of $A$ and $\ve s$ that of $B$, we write $\ve r, \ve s$ or $\langle\ve r,\ve s\rangle$ the proof of $A \wedge B$. However if $A \wedge B$ and $B \wedge A$ are {\em the same} proposition, we get $\ve r, \ve s$ and $\ve s, \ve r$ to be the same term. Let us write ``$+$'' to the commutative comma\footnote{We could chose another symbol, however $+$ is the one used in most non-deterministic settings.} and set the rule
$$\prooftree{ \ve r:A\qquad \ve s:B}
  \justifies{ \ve r + \ve s:A \wedge B}
  \endprooftree\ .$$

In the same way, the associativity of $\wedge$ induces that of $+$.
Furthermore, the isomorphism (\ref{iso:distrib}) of Figure~\ref{fig:iso} induces the following equivalence on proofs. If $\ve r$ is a proof of $A \Rightarrow B$, $\ve s$ one of $A \Rightarrow C$, and $\ve t$ one of $A$ then $\ve r + \ve s$ is a proof of $A \Rightarrow (B \wedge C)$ and $(\ve r + \ve s)\ve t$ is a proof of $B \wedge C$. This proof is the same as $\ve r\ve t + \ve s\ve t$.
Summarising, from the equivalences between types we obtained a commutative and associative $+$, which is such that the application right-distributes over it.

Several non-classical type systems have been already proposed for the non-deterministic and algebraic calculi, e.g.~\cite{ArrighiDiazcaroLMCS12,ArrighiDiazcaroValironDCM11,DiazcaroPetitWoLLIC12}. In these systems there is already an equivalence relation on propositions such that if $A\equiv B$ and $A$ types a term, then also $B$ types it. Such equivalence is reminiscent of type theory~\cite{CoquandHuetIC88,MartinLof84} and deduction modulo~\cite{DowekHardinKirchnerJAR03,DowekWernerJSL98}. But here we go further, introducing an equivalence relation that equates types built with different connectives such as $A\Rightarrow(B\wedge C)$ and $(A\Rightarrow B)\wedge(A\Rightarrow C)$, which is not possible there.
Moreover, there is no elimination rule for conjunction in~\cite{ArrighiDiazcaroLMCS12,ArrighiDiazcaroValironDCM11,DiazcaroPetitWoLLIC12}. Indeed, having commutativity and associativity properties in both, the sums of terms and the conjunctions of propositions, leads to uncertainty on how to eliminate them. A rule like
``$\ve r:A\wedge B$ implies $ \pi_1(\ve r):A$'',
would not be consistent.
If $A$ and $B$ are two arbitrary types, $\ve s$ a term of type $A$ and $\ve t$ a term of type $B$, then $\ve s+\ve t$ has both types $A\wedge B$ and $B \wedge A$, thus $\pi_1(\ve s +\ve t)$ would have both type $A$ and type $B$.
Hence, a naive rule would lead to inconsistency. The projection would project a random term of any of the types of its arguments, so not being a trustfully valid proof for any proposition.

The approach we follow here is to consider explicitly typed terms (Church style), and hence make the projection to depend on the type:
if $\ve r:A\wedge B$ then $\pi_A(\ve r):A$.
This way, we recover consistency of the proof system. This new form of projection entails allowing some non-determinism directly in the rewrite system. Indeed, if $\ve r$ and $\ve s$ have the same type $A$,
$\pi_A(\ve r +\ve s)$ both reduces to $\ve r$ and to $\ve s$.
A priori this does not entail any problem; any of them is a valid proof of the same proposition $A$. This approach can be summarised by the slogan {\em ``the subject reduction property is more important than the uniqueness of results''}~\cite{DowekJiangIC11}. Therefore the projection turns the non-deterministic choice explicit.

We formalise all of the previously discussed concepts in Section~\ref{sec:calculus}, where we present the calculus \OurCalculus, and provide some examples. Section~\ref{sec:SR} The next section is devoted to prove that our system enjoys the subject reduction property. In Section~\ref{sec:prob} we discuss the relation of this setting with respect to the algebraic approach. Finally, Section~\ref{sec:conclusion} concludes the paper with suggestions for future research.

\section{The calculus}\label{sec:calculus}
\subsection{Definitions}
In this section we present the calculus \OurCalculus, an explicitly typed lambda-calculus extended with a $+$ operator as discussed in the introduction.
We consider the following grammar of types
$$A,B,C,\dots\ ::=\quad X~|~A\Rightarrow B~|~A\wedge B~|~\forall X.A\enspace,$$
where the isomorphisms (\ref{iso:comm}), (\ref{iso:asso}) and (\ref{iso:distrib}) from Figure~\ref{fig:iso} are made explicit by an equivalence relation between types
$$A\wedge B~\equiv~ B\wedge A\quad ,\quad
(A\wedge B)\wedge C~\equiv~ A\wedge(B\wedge C)\quad \mbox{and}\quad
A\Rightarrow (B\wedge C)~\equiv~ (A\Rightarrow B)\wedge(A\Rightarrow C)\enspace.$$

The set of terms $\Lambda$ is defined inductively by the grammar
$$\ve r,\ve s,\ve t\ ::=\quad x^A~|~\lambda x^A.\ve r~|~\ve r\ve s~|~\ve r+\ve s~|~\pi_{A}(\ve r)~|~\Lambda X.\ve r~|~\ve r\{A\}\enspace.$$

All our variable occurrences are explicitly typed, but we usually omit the superscript indicating the type of variables when it is clear from the context. For example we write $\lambda x^A.x$ instead of $\lambda x^A.x^A$.
The {\em $\alpha$-conversion} and the sets $FV(\ve r)$ of {\em free variables of $\ve r$} and $FV(A)$ of {\em free variables of $A$} are defined as usual in the $\lambda$-calculus (cf.~\cite[\S2.1]{Barendregt84}). For example $FV(x^Ay^B)=\{x^A,y^B\}$.
The same variable, with different types, is treated as a different variable. For example, the term $\lambda x^A.x^B:A\Rightarrow B$ is typable in our system, and it is the constant function $x^B$, since $x^B$ is free in the term $\lambda x^A.x^B$.
We say that a term $\ve r$ is {\em closed} whenever $FV(\ve r)=\emptyset$. Given two terms $\ve r$ and $\ve s$ we denote by $\ve r[\ve s/x]$ the term obtained by simultaneously substituting the term $\ve s$ for all the free occurrences of $x$ in $\ve r$, subject to the usual proviso about renaming bound variables in $\ve r$ to avoid capture of the variables free in $\ve s$. Analogously $A[B/X]$ denotes the substitution of the type $B$ for all the free occurrences of $X$ in $A$, and $\ve r[B/X]$ the substitution in $\ve r$. For example, $(x^A)[B/Y]=x^{(A[B/Y])}$, $(\lambda x^A.\ve r)[B/X]=\lambda x^{A[B/X]}.\ve r[B/X]$ and $(\pi_A(\ve r))[B/X]=\pi_{A[B/X]}(\ve r[B/X])$. Simultaneous substitutions are defined in the same way. Finally, terms and types are considered up to $\alpha$-conversion.

Each term of the language has a main type associated, which can be obtained from the type annotations, and other types induced by the type equivalences.
The type system for \OurCalculus\ is given in Figure~\ref{fig:typeSys}.
If $FV(\ve r)=\{x_1^{A_1},\dots,x_n^{A_n}\}$, we write $\Gamma(\ve r)=\{A_1,\dots,A_n\}$. $FV(\{A_1,\dots,A_n\})$ is defined by $\bigcup_{i=1}^nFV(A_i)$.
{\em Typing judgements} are of the form $\ve r:A$. A term $\ve r$ is {\em typable} if there exists a type $A$ such that $\ve r:A$.

\begin{figure}[!ht]
\centering\vspace{0.1cm}
\hspace{0.25cm}
  \prooftree
  \justifies x^A:A
  \using{ax}
  \endprooftree
\hfill
  \cond{A\equiv B}\prooftree\ve r: A
  \justifies\ve r: B
  \using\equiv
  \endprooftree
\hfill
  \prooftree\ve r:B
  \justifies \lambda x^A.\ve r:A\Rightarrow B
  \using\Rightarrow_I
  \endprooftree
\hfill
  \prooftree\ve r:A\Rightarrow B\quad \ve s:A
  \justifies\ve r\ve s:B
  \using\Rightarrow_E
  \endprooftree\hspace{0.10cm}
\vspace{0.5cm}

\hspace{0.25cm}
  \prooftree\ve r:A\quad \ve s:B
  \justifies\ve r+\ve s:A\wedge B
  \using\wedge_I
  \endprooftree
\hfill
  \prooftree\ve r:A\wedge B
  \justifies\pi_A(\ve r):A
  \using\wedge_E
  \endprooftree
\hfill
  \cond{X\notin FV(\Gamma(\ve r))}\prooftree\ve r: A
  \justifies\Lambda X.\ve r:\forall X.A
  \using\forall_I
  \endprooftree
\hfill
  \prooftree\ve r:\forall X.A
  \justifies\ve r\{B\}:A[B/X]
  \using\forall_E
  \endprooftree\hspace{0.10cm}

\vspace{0.1cm}
  \caption{The type system for \OurCalculus}
  \label{fig:typeSys}
\end{figure}

Lemma~\ref{lem:unicity} states that the typing modulo equivalences is unique.
\begin{lemma}\label{lem:unicity}
 If $\ve r:A$ and $\ve r:B$, then $A\equiv B$.
\end{lemma}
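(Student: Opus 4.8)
The plan is to proceed by structural induction on the term $\ve r$, proving that all types assignable to $\ve r$ are pairwise equivalent. The difficulty is that the rule $\equiv$ is not syntax-directed, so the last rule of a derivation need not reflect the head constructor of $\ve r$. To control this I would first establish, for each term constructor, a \emph{generation lemma} by induction on the typing derivation: for instance, if $\lambda x^A.\ve r:C$ then $C\equiv A\Rightarrow B$ for some $B$ with $\ve r:B$; if $\ve r+\ve s:C$ then $C\equiv A\wedge B$ with $\ve r:A$ and $\ve s:B$; if $\pi_A(\ve r):C$ then $C\equiv A$; if $\Lambda X.\ve r:C$ then $C\equiv\forall X.A$ with $\ve r:A$; if $\ve r\{B\}:C$ then $C\equiv A[B/X]$ with $\ve r:\forall X.A$; and if $\ve r\ve s:C$ then $C\equiv E$ with $\ve r:D\Rightarrow E$ and $\ve s:D$. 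In each of these inductions the only delicate case is the one where the derivation ends with $\equiv$, which is absorbed using transitivity (collapsing a run of equivalence steps into a single one).

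Given the generation lemmas, most cases of the main induction close immediately, using that $\equiv$ is a congruence (closed under $\Rightarrow$, $\wedge$ and $\forall$) and is stable under substitution. For the axiom the type is the annotation $A$, and for $\pi_A(\ve r)$ it is again the annotation $A$, so uniqueness is trivial. For $\lambda x^A.\ve r$, $\ve r+\ve s$ and $\Lambda X.\ve r$ the induction hypothesis makes the component types unique up to $\equiv$, and congruence propagates this to $A\Rightarrow B$, $A\wedge B$ and $\forall X.A$ respectively. For $\ve r\{B\}$ I would additionally observe that no equivalence rule acts on a top-level $\forall$ (none of the three equations mentions $\forall$), so $\forall X.A\equiv T$ forces $T$ to be a universal type with an $\equiv$-equivalent body; combined with substitution stability this yields $A[B/X]\equiv A'[B/X']$.

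The real obstacle is the application case $\ve r\ve s$. Here the generation lemma gives $\ve r:D\Rightarrow E$ and $\ve r:D'\Rightarrow E'$ with $D\equiv D'$ (by the induction hypothesis on $\ve s$) and $D\Rightarrow E\equiv D'\Rightarrow E'$ (by the induction hypothesis on $\ve r$), and one needs $E\equiv E'$. Unlike $\forall$, the arrow is \emph{not} preserved at the top by $\equiv$, because the distributivity $A\Rightarrow(B\wedge C)\equiv(A\Rightarrow B)\wedge(A\Rightarrow C)$ turns an arrow into a conjunction. The key lemma I must establish is therefore an arrow-cancellation property: if $D\Rightarrow E\equiv D'\Rightarrow E'$ and $D\equiv D'$, then $E\equiv E'$.

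I expect this cancellation lemma to be the crux, and I would prove it through a normal form for types. Reading the distributivity left to right as a rewrite rule that pushes $\Rightarrow$ through $\wedge$, every type rewrites to a conjunction (a finite multiset, since $\wedge$ is associative and commutative) of \emph{prime} factors of the shape $X$, $\forall X.A$, or $D\Rightarrow P$ with $P$ prime and $D$ in normal form. After checking termination and confluence of this rewriting modulo the associativity and commutativity of $\wedge$, one obtains that $A\equiv B$ holds exactly when $A$ and $B$ have the same multiset normal form. The normal form of $D\Rightarrow E$ is then the multiset $\{\,\widehat D\Rightarrow P : P\in\mathrm{nf}(E)\,\}$, where $\widehat D=\mathrm{nf}(D)$; since $D\equiv D'$ gives $\widehat D=\widehat{D'}$, equality of the two multisets lets me cancel the common domain $\widehat D$ and conclude $\mathrm{nf}(E)=\mathrm{nf}(E')$, that is $E\equiv E'$. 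The same normal form also gives a clean justification of the congruence, substitution and $\forall$-preservation properties invoked above.
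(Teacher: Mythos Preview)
Your proposal is correct and is, in effect, the rigorous proof hiding behind the paper's two-sentence sketch. The paper simply observes that the system without rule $\equiv$ is syntax-directed and that $\equiv$ is the only rule that changes the type without changing the term; it does not spell out the induction, the generation lemmas, or the application case. You have correctly identified that the application case is the only non-trivial one and that it hinges on an arrow-cancellation property for $\equiv$.

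Two remarks. First, your normal-form argument actually yields the stronger statement that $D\Rightarrow E\equiv D'\Rightarrow E'$ alone implies both $D\equiv D'$ and $E\equiv E'$: every prime factor of the normal form of $D\Rightarrow E$ has shape $\widehat D\Rightarrow P$, so matching any factor against one of shape $\widehat{D'}\Rightarrow Q$ already forces $\widehat D=\widehat{D'}$, and then injectivity of $P\mapsto\widehat D\Rightarrow P$ gives $\mathrm{nf}(E)=\mathrm{nf}(E')$. The extra hypothesis $D\equiv D'$ coming from the induction hypothesis on $\ve s$ is thus redundant. This stronger form is exactly what the paper invokes---again without proof---in the subject-reduction argument when it writes ``since $A\Rightarrow E\equiv C\Rightarrow D$, it must be $A\equiv C$ and $E\equiv D$''. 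Second, for your normal form to be canonical you should also normalise under $\forall$ (the congruence closure lets you rewrite there), and you will need a genuine termination measure for the left-to-right orientation of distributivity, since the right-hand side duplicates the domain $A$; this is standard but not entirely trivial.
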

\begin{proof}
  Without rule $\equiv$, the type system is syntax directed. The only rule able to modify the type of a term without changing it is $\equiv$.
\end{proof}

The operational semantics of the calculus is given in Figure~\ref{fig:opSem}, where there are two distinct relations between terms: $\re$ and a symmetric relation $\eq$.
We write $\eq^*$ and $\re^*$ for the transitive and reflexive closures of $\eq$ and $\re$ respectively.
In particular, notice that $\eq^*$ is an equivalence relation.

\begin{figure}[!h]\centering
    \emph{Symmetric relation:}\\
    \hspace{1cm}$\ve r+\ve s\eq\ve s+\ve r$,\hfill
	$(\ve r+\ve s)+\ve t\eq\ve r+(\ve s+\ve t)$,\hfill
    $(\ve r+\ve s)\ve t\eq\ve r\ve t+\ve s\ve t$,\hspace{1cm}

	\hspace{1cm}$\lambda x^A.(\ve r+\ve s)\eq\lambda x^A.\ve r+\lambda x^A.\ve s$,\hfill
	If $\ve r:A\Rightarrow (B\wedge C)$, then $\pi_{A\Rightarrow B}(\ve r)\ve s\eq\pi_B(\ve r\ve s)$.\hspace{1cm}
\vspace{0.3cm}

	\emph{Reductions:}\\
    \hspace{1cm}$(\lambda x^A.\ve r)~\ve s\re\ve r[\ve s/x]$,\hfill
	$(\Lambda X.\ve r)\{A\}\re\ve r[A/X]$,\hfill
    If $\ve r:A$, then $\pi_A(\ve r+\ve s)\re\ve r$.\hspace{1cm}

	\caption{Operational semantics of \OurCalculus}
	\label{fig:opSem}
\end{figure}

\subsection{Examples}
\begin{example} We have
 $\lambda x^{A\wedge B}.x:(A\wedge B)\Rightarrow (A\wedge B)$ and so by rule $\equiv$,
 $\lambda x^{A\wedge B}.x:((A\wedge B)\Rightarrow A)\wedge((A\wedge B)\Rightarrow B)$, from which we can obtain $\pi_{(A\wedge B)\Rightarrow A}(\lambda x^{A\wedge B}.x):(A\wedge B)\Rightarrow A$.
 Let $\ve r:A\wedge B$, then $\pi_{(A\wedge B)\Rightarrow A}(\lambda x^{A\wedge B}.x)\ve r:A$,
and notice that
 $\pi_{(A\wedge B)\Rightarrow A}(\lambda x^{A\wedge B}.x)\ve r~\eq~\pi_A((\lambda x^{A\wedge B}.x)\ve r)~\re~\pi_A(\ve r)$.
\end{example}

\begin{example} Let $\tf=\lambda x^A.\lambda y^B.(x+y)$. It is easy to check that $ \tf:A\Rightarrow B\Rightarrow (A\wedge B)$, and by rule $\equiv$ it also has the type $(A\Rightarrow B\Rightarrow A)\wedge (A\Rightarrow B\Rightarrow B)$. Therefore, $\pi_{A\Rightarrow B\Rightarrow A}(\tf):A\Rightarrow B\Rightarrow A$ is well typed. In addition, if $\ve r:A$ and $\ve s:B$, we have $\pi_{A\Rightarrow B\Rightarrow A}(\tf)\ve r\ve s:A$.

Notice that
$\pi_{A\Rightarrow B\Rightarrow A}(\tf)\ve r\ve s~\eq~
\pi_{B\Rightarrow A}(\tf\ve r)\ve s~\eq~
\pi_{A}(\tf\ve r\ve s)~\re~
\pi_A((\lambda y^B.(\ve r+y))\ve s)~\re~
\pi_A(\ve r+\ve s)\re\ve r$,
which is coherent with such typing.
\end{example}

\begin{example}
 Let $\true=\lambda x^A.\lambda y^B.x$ and $\false=\lambda x^A.\lambda y^B.y$. Then $ \true+\false:(A\Rightarrow B\Rightarrow A)\wedge (A\Rightarrow B\Rightarrow B)$, hence $\pi_{(A\Rightarrow B\Rightarrow A)\wedge(A\Rightarrow B\Rightarrow B)}(\true+\false+\tf)$ reduces non-deterministically either to $\true+\false$ or to $\tf$. Moreover, notice that $\true+\false\eq^* \tf$, hence in this very particular case, the non-deterministic choice does not play any role.
\end{example}

\section{Subject reduction}\label{sec:SR}
In this section we prove that the set of types assigned to a term is invariant under $\eq$ and $\re$.
In other words, Theorem~\ref{thm:SR} states that if $\ve r$ is a proof of $A$, any reduction fired from $\ve r$ will still be a proof of $A$.

The substitution lemma below will be the key ingredient in the proof of subject reduction.
It ensures that when substituting types for type variables or terms for term variables,
in an adequate manner, the typing judgements remain valid.

\begin{lemma}[Substitution]\label{lem:substitution}~
 If $\ve r:B$ and $\ve s:A$, then $\ve r[\ve s/x^A]:B$. Also,
 If $\ve r:A$, then $\ve r[B/X]:A[B/X]$.
\end{lemma}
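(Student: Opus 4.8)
The plan is to prove both statements simultaneously by structural induction on the derivation of the typing judgement for $\ve r$, treating the term substitution $\subst{\ve s}{x^A}$ and the type substitution $\subst{B}{X}$ as two parallel cases that share the same skeleton. The two claims are genuinely intertwined in the $\forall$-cases, so although one could try to separate them, I expect it is cleaner to handle each by induction with the other available as an induction hypothesis where needed. First I would fix the statement precisely: for the term case, I assume $\ve r:B$ and $\ve s:A$, and show $\ve r\subst{\ve s}{x^A}:B$; for the type case, I assume $\ve r:A$ and show $\ve r\subst{B}{X}:A\subst{B}{X}$. I would then proceed by case analysis on the last rule used in the derivation of $\ve r$'s type.

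For the term-substitution part, the interesting base case is the axiom $x^A:A$. Here substituting $\ve s$ for $x^A$ yields $\ve s$, which has type $A$ by hypothesis, matching the required type $B=A$. For a variable $y^C$ with $y^C\neq x^A$ (recall the paper treats $x^A$ and $x^C$ with $C\neq A$ as genuinely distinct variables), the substitution leaves the variable untouched and the typing is unchanged. The inductive cases for $\lambda x^A.\ve r$, application, $+$, $\pi_A$, $\Lambda X.\ve r$ and $\ve r\{A\}$ each push the substitution inward using the substitution clauses spelled out in the paper (e.g.\ $(\pi_A(\ve r))\subst{B}{X}=\pi_{A\subst{B}{X}}(\ve r\subst{B}{X})$) and then reapply the same typing rule with the induction hypothesis. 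The crucial rule to keep available throughout is the equivalence rule $\equiv$: since a derivation may end by converting $A$ to an equivalent $B$, I must check that substitution respects $\equiv$, i.e.\ that $A\equiv B$ implies $A\subst{C}{X}\equiv B\subst{C}{X}$ (and likewise that the relation is stable, which follows from the three generating equivalences being purely structural). This gives the $\equiv$ case immediately by the induction hypothesis.

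For the type-substitution part, the delicate cases are $\forall_I$ and $\forall_E$, where capture-avoidance and the side condition $X\notin FV(\Gamma(\ve r))$ come into play. In the $\forall_I$ case, $\Lambda Y.\ve r:\forall Y.A'$ was derived from $\ve r:A'$ with $Y\notin FV(\Gamma(\ve r))$; after substituting $\subst{B}{X}$ I must ensure $Y$ is chosen (by $\alpha$-conversion) not free in $B$ and distinct from $X$, so that $(\forall Y.A')\subst{B}{X}=\forall Y.(A'\subst{B}{X})$ and the side condition is preserved. In the $\forall_E$ case I must commute the two substitutions, using the standard substitution-composition identity $A'\subst{C}{Y}\subst{B}{X}=A'\subst{B}{X}\subst{C\subst{B}{X}}{Y}$ when $Y\notin FV(B)$, which is exactly the arithmetic of nested substitutions in System F.

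The main obstacle I anticipate is not any single case but the bookkeeping around the $\equiv$ relation and around $\alpha$-conversion in the $\forall$ cases: I must verify once and for all that the type equivalence is a congruence stable under both kinds of substitution, and that the explicit typing of variables interacts correctly with substitution (in particular that $\ve r[\ve s/x^A]$ only touches occurrences typed exactly $A$). Once that stability lemma for $\equiv$ is dispatched, every remaining case is a routine application of the corresponding typing rule to the induction hypotheses, so the proof reduces to carefully matching substitution clauses against typing rules.
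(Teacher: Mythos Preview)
Your proposal is correct and far more detailed than the paper's own proof, which is the single line ``By induction over $\ve r$ for the first result and over the type derivation for the second.'' Two small remarks are worth making.

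First, the paper inducts on the \emph{term} $\ve r$ for the term-substitution half and on the \emph{derivation} for the type-substitution half, whereas you induct on the derivation for both. Either choice works here because the system is syntax-directed modulo the rule $\equiv$; inducting on the term spares you a separate $\equiv$ case for the first half (term substitution does not touch types, so the equivalence rule commutes trivially), while your choice makes the two halves look uniform at the cost of that extra case. Both are standard and neither buys anything deep.

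Second, your claim that the two statements are ``genuinely intertwined in the $\forall$-cases'' is not actually borne out: the term-substitution half never needs the type-substitution half (in $\forall_I$ and $\forall_E$ the type is unchanged by a term substitution, modulo the side-condition bookkeeping you correctly flag), and conversely the type-substitution half never needs the term-substitution half. So they can be proven independently, exactly as the paper's one-liner suggests. This does not damage your argument---a simultaneous induction with an unused mutual hypothesis is still a valid induction---but you should not expect to invoke one inside the other. The genuine auxiliary facts you need, and correctly identify, are the congruence and substitution-stability of $\equiv$, and the standard substitution-composition identity for the $\forall_E$ case; the paper leaves all of this implicit.
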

\begin{proof} By induction over $\ve r$ for the first result and over the type derivation for the second.
\end{proof}

\noindent Now we can prove the subject reduction property, ensuring that the typing is preserved during reduction.

\begin{theorem}[Subject reduction]\label{thm:SR}
 If $\ve r\to\ve s$ and $\ve r:A$, then $\ve s:A$\quad (where $\to$ is either $\re$ or $\eq$).
\end{theorem}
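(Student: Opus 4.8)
The plan is to prove the statement by a case analysis on the rule that justifies $\ve r\to\ve s$, treating each base rule of $\re$ and $\eq$ from Figure~\ref{fig:opSem} separately and then closing under the term constructors by a routine induction: if the redex sits strictly inside a context, the immediate subterm keeps its \emph{exact} type by the induction hypothesis, and the surrounding typing rule reassembles the same type. Before doing this I would isolate the inversion behaviour of the system. Since the only rule that alters a type without being forced by the shape of the term is $\equiv$, and $\equiv$ is transitive (as already observed for Lemma~\ref{lem:unicity}), any derivation of $\ve r:D$ collapses to the syntax-directed rule for the head of $\ve r$ followed by a single use of $\equiv$. This yields generation lemmas of the expected shape: $\ve r+\ve s:D$ forces $\ve r:B$ and $\ve s:B'$ with $B\wedge B'\equiv D$; $\ve r\ve s:D$ forces $\ve r:C\Rightarrow D$ and $\ve s:C$; $\lambda x^A.\ve r:D$ forces $D\equiv A\Rightarrow B$ with $\ve r:B$; and similarly for $\pi$, $\Lambda$ and $\ve r\{B\}$.

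With these in hand, the substitution lemma, Lemma~\ref{lem:substitution}, disposes of the two $\beta$-style reductions. For $(\lambda x^A.\ve r)\ve s\re\ve r[\ve s/x]$ of type $D$, generation gives $\lambda x^A.\ve r:C\Rightarrow D$ and $\ve s:C$, then $C\Rightarrow D\equiv A\Rightarrow B$ with $\ve r:B$; extracting $C\equiv A$ and $D\equiv B$ I obtain $\ve s:A$ and, by substitution and $\equiv$, $\ve r[\ve s/x^A]:B\equiv D$. The redex $(\Lambda X.\ve r)\{A\}$ is handled analogously with the second half of Lemma~\ref{lem:substitution}, using that $\forall$ is rigid for $\equiv$. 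The projection reduction $\pi_A(\ve r+\ve s)\re\ve r$ is immediate: generation yields $D\equiv A$ while the side condition gives $\ve r:A$, hence $\ve r:D$. The commutativity, associativity and $\lambda$-distribution cases of $\eq$ reduce, after applying the generation lemma for $+$, to the matching commutativity, associativity and distributivity equations of $\equiv$.

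The real work is in the cases tied to the distributivity isomorphism, namely $(\ve r+\ve s)\ve t\eq\ve r\ve t+\ve s\ve t$ and $\pi_{A\Rightarrow B}(\ve r)\ve s\eq\pi_B(\ve r\ve s)$, and here I expect the main obstacle to be a structural analysis of $\equiv$ itself. Generation applied to $(\ve r+\ve s)\ve t:D$ yields $\ve r:B$, $\ve s:B'$ with $B\wedge B'\equiv C\Rightarrow D$ and $\ve t:C$, and to conclude I must turn the conjunct $B$ into an actual arrow with domain $C$. This needs two facts that do not follow by a direct induction on $\equiv$-derivations, since distributivity mixes $\wedge$ and $\Rightarrow$: a \emph{decomposition} property, that $B\wedge B'\equiv C\Rightarrow D$ implies $B\equiv C\Rightarrow D_1$ and $B'\equiv C\Rightarrow D_2$ with $D_1\wedge D_2\equiv D$, and an \emph{injectivity} property, that $C\Rightarrow D\equiv A\Rightarrow B$ implies $C\equiv A$ and $D\equiv B$ (which I already used for $\beta$). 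I would obtain both by orienting distributivity left to right: together with associativity and commutativity this gives a terminating, confluent rewriting on types, hence unique normal forms in which every type is a multiset of prime factors, each a variable, a universal, or an arrow whose codomain is prime, with two types equivalent exactly when their normal forms coincide. Decomposition and injectivity then read off the normal form, and with them the distributivity case of application and the $\pi$-commutation case (the latter also invoking its typing side condition $\ve r:A\Rightarrow(B\wedge C)$) go through in both directions, completing the case analysis.
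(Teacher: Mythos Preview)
Your proposal is correct and follows essentially the same architecture as the paper's proof: induction on the reduction/equivalence rule, inversion of typing derivations up to a trailing use of $\equiv$, and appeal to Lemma~\ref{lem:substitution} for the two $\beta$-style redexes. The paper only spells out the cases $(\lambda x^A.\ve r)\ve s\re\ve r[\ve s/x]$ and $\pi_{A\Rightarrow B}(\ve r)\ve s\eq\pi_B(\ve r\ve s)$, and in doing so it silently uses exactly the injectivity fact you isolate (``since $A\Rightarrow E\equiv C\Rightarrow D$, it must be $A\equiv C$ and $E\equiv D$''), without justification. Your normal-form analysis of $\equiv$---orienting distributivity, obtaining unique AC-normal forms as multisets of primes, and reading off both arrow injectivity and the decomposition $B\wedge B'\equiv C\Rightarrow D\Rightarrow B\equiv C\Rightarrow D_1,\ B'\equiv C\Rightarrow D_2$---is additional work that the paper omits but that is genuinely needed to make the argument airtight, in particular for the $(\ve r+\ve s)\ve t$ direction of the application-distributivity rule.
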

\begin{proof}
 By induction over the reduction relation. We give only two interesting cases.

\noindent\textbf{Rule  $\pi_{A\Rightarrow B}(\ve r)\ve s\eq\pi_B(\ve r\ve s)$, with $\ve r:A\Rightarrow(B\wedge C)$}. Let $\pi_{A\Rightarrow B}(\ve r)\ve s:D$, then $\pi_{A\Rightarrow B}(\ve r):E\Rightarrow D$ and $\ve s:E$. But then $E\equiv A$ and $D\equiv B$, because clearly, the main type for $\pi_{A\Rightarrow B}(\cdot)$ is $A\Rightarrow B$, so $\ve r:(A\Rightarrow B)\wedge F$, however since $\ve r:A\Rightarrow(B\wedge C)$, we have $F\equiv A\Rightarrow C$. So, by rule $\Rightarrow_E$, $\ve r\ve s:B\wedge C$. We conclude by rule $\wedge_E$.
  For the inverse direction, let $\pi_B(\ve r\ve s):D$. Then $D\equiv B$ and $\ve r\ve s:B\wedge E$, so $\ve r:F\Rightarrow(B\wedge E)$ and $\ve s:F$. Hence, since $\ve r:A\Rightarrow(B\wedge C)$, by Lemma~\ref{lem:unicity}, we have $F\equiv A$ and $E\equiv C$, so $\pi_{A\Rightarrow B}(\ve r):A\Rightarrow B$, from which, we conclude $\pi_{A\Rightarrow B}(\ve r)\ve s:B$. We conclude by rule $\equiv$.

\noindent\textbf{Rule $(\lambda x^A.\ve r)~\ve s\re\ve r\subst{\ve s}{x}$.} Let $(\lambda x^A.\ve r)\ve s:B$, then $\lambda x^A.\ve r:C\Rightarrow D$ and $\ve s:C$, with $D\equiv B$. Then $\ve r:E$, with $A\Rightarrow E\equiv C\Rightarrow D$. Notice that, since $A\Rightarrow E\equiv C\Rightarrow D$, it must be $A\equiv C$ and $E\equiv D$. Then, by rule $\equiv$, $\ve s:A$, and so, by Lemma~\ref{lem:substitution}, $\ve r[\ve s/x^A]:E$, and since $E\equiv D\equiv B$, by rule $\equiv$, we obtain $\ve r[\ve s/x^A]:B$.
%
%
%
%
  \qedhere
\end{proof}

\section{From non-determinism to probabilities}\label{sec:prob}
In~\cite{ArrighiDowekRTA08} and \cite{VauxMSCS09} two algebraic extensions of the untyped lambda-calculus are introduced, which we call $\lambda_{\textrm{lin}}$ and $\lambda_{\textrm{alg}}$ respectively. In these settings, not only the $+$ operator is present, but also a scalar pondering each choice. Hence, if $\ve r$ and $\ve s$ are two possible terms, so is the linear combination of them $\alpha.\ve r+\beta.\ve s$, with $\alpha,\beta$ some kind of scalars (taken from a generic ring in $\lambda_{\textrm{lin}}$ or from $\mathbb{R}^{\geq 0}$ in $\lambda_{\textrm{alg}}$). Both these calculi identify the term $(\ve r+\ve s)\ve t$ with $\ve r\ve t+\ve s\ve t$, either with a rewrite system or an equality, and $+$ is associative and commutative. Also, the scalars interact with the $+$, e.g.~$\ve r+\ve r\leftrightarrow 2.\ve r$. By restricting the scalars to positive real numbers,
or even to natural numbers, one possible interpretation is that the scalars give the probability of following one possible path (after `normalising' the scalars, i.e.~dividing over the total amount in order to sum up to $1$).
In this way, the term $2.\ve r+\ve s$ is twice more likely to run $\ve r$ than $\ve s$.

Indeed, in \cite[\S6]{ArrighiDiazcaroLMCS12} the type system $\mathsf{B}$ for $\lambda_{\textrm{lin}}$ is proposed, which can decide whether a superposition is a probability distribution (i.e.~it can check that the sum of terms is up to $1$). Such a system includes scalars at the type level, reflecting those in the terms, so $\alpha.\ve r$ has type $\alpha.A$ whenever $\ve r$ has type $A$. This provides a powerful tool to account for the scalars within the terms, however it entails a `non-classical' extension of System F with scalars pondering the types. In such a formalism, there is no possibility to tie terms with different types: if $\ve r$ and $\ve s$ have both type $A$, then $\alpha.\ve r+\beta.\ve s$ have type $(\alpha+\beta).A$, however if the types of $\ve r$ and $\ve s$ differ, the previous term cannot be typed.
That weakness is solved in~\cite{ArrighiDiazcaroValironDCM11}, where a more powerful system is introduced, with a type system also allowing for linear combination of types, just like for terms. In both these systems, while powerful, it is hard to establish a connection with a well-known logic. That is precisely the goal of~\cite{BuirasDiazcaroJaskelioffLSFA11}, where a more `classic' system is developed, with no scalars at the type level. However it carries some costs: first, it is only meant for positive real scalars (which anyway is enough for a `probabilistic' interpretation), and more importantly, the type system gives just an approximation, an upper bound, of the scalars in the terms.

We could envisage extending \OurCalculus\ with a more thorough projection where $\pi_A(\alpha.\ve r+\beta.\ve s)$ would output
\begin{wrapfigure}{r}{0.37\textwidth}
  \begin{center}\vspace{-0.2in}
$\xymatrix@C=5ex@R=4ex{
   & \pi_A(\ve r+\pi_A(\ve s+\ve t)+\ve t) \ar[ddl] \ar[ddr]\ar[d] & \\
    & \pi_A(\ve s+\ve t)\ar[d] \ar[dr] & \\
   \ve r & \ve s & \ve t
  }$\end{center}
\end{wrapfigure}
either $\ve r$, with probability $\alpha$, or $\ve s$ with probability $\beta$. However, even when the scalars are not explicitly written, the probabilities are present. The following example is clarifying.

Let $\ve r:A$, $\ve s:A$ and $\ve t:A$. Then, the reductions depicted in the diagram at right are possible. If we consider $\pi_A$ making an equiprobable choice instead of a non-deterministic one, it is clear that $\ve t$ have more probability to be reached, followed by $\ve r$, and the less likely is $\ve s$.

\begin{wrapfigure}{L}{0.39\textwidth}
  \begin{center}\vspace{-0.2in}
$\xymatrix@C=6ex@R=4ex{
   & \pi_A(\ve r+\pi_A(\ve s+\ve t)+\ve t) \ar[ddl]_(.36){\frac{1}{3}} \ar[ddr]^(.35){\frac{1}{3}} \ar[d]_{\frac{1}{3}} & \\
    & \pi_A(\ve s+\ve t)\ar[d]_{\frac{1}{2}} \ar[dr]^{\frac{1}{2}} & \\
   \ve r & \ve s & \ve t
  }$
    \end{center}
\end{wrapfigure}

Indeed, we can calculate the global probability of reaching each possibility by labelling the reductions with its local probability as shown in the diagram at left, from where just by summing up the labels reaching a term, and multiplying those in the same path, we can easily check that the term $\ve r$ has probability $\frac{1}{3}$ of being reached, the term $\ve s$ probability $\frac{1}{6}$ and the term $\ve t$ probability $\frac{1}{2}$.
Hence, this term would be expressed with
scalars as $\frac{1}{3}\ve r+\frac{1}{6}\ve s+\frac{1}{2}\ve t$ according to the
previously discussed interpretation.
Therefore, \OurCalculus\ could be seen as a sort of algebraic calculus, with implicit scalars taken from $\mathbb{Q}^{[0,1]}$, typed with a standard type system. These ideas will be fully developed in a future research.

\section{Conclusions and future work}\label{sec:conclusion}
\subsection{Conclusions}
In this paper we have introduced \OurCalculus, a proof system for second order propositional logic with an associative and commutative conjunction, and implication. In this system, isomorphic propositions get the same proofs. At this first step we only consider three isomorphisms, namely commutativity and associativity of the conjunction, and distributivity of implication with respect to conjunction. We use the symbol $+$ to put together the proofs of different propositions, so $\ve r+\ve s$ becomes a proof of $A\wedge B$, if $\ve r$ is a proof of $A$ and $\ve s$ a proof of $B$. Such a symbol is commutative and associative, and application is right-distributive with respect to it, to account for the isomorphisms of propositions.

This construction entails a non-deterministic projection where if a proposition has two possible proofs, the projection of its conjunction can output any of them. For example, if $\ve r$ and $\ve s$ are two possible proofs of $A$, then $\pi_A(\ve r+\ve s)$ will output either $\ve r$ or $\ve s$.

In several works (cf.~\cite[\S3.4]{ManzonettoPhDThesis} for a reference), the non-determinism is modelled by two operators. The first  is normally written $+$, and instead of distributing over application, it actually makes the non-deterministic choice. Hence $(\ve r+\ve s)\ve t$ reduces either to $\ve r\ve t$ or to $\ve s\ve t$~\cite{deLiguoroPipernoIC95}. The second one, denoted by $\parallel$, does not make the choice, and therefore $(\ve r\parallel\ve s)\ve t$ reduces to $\ve r\ve t\parallel\ve s\ve t$~\cite{DezaniciancagliniDeliguoroPipernoSIAM98}. One way to interpret these operators is that the first one is a non-deterministic one, while the second is the parallel composition. Another common interpretation is that $+$ is a {\it may-convergent} non-deterministic operator, where type systems ensure that at least one branch converges, while $\parallel$ is a {\it must-convergent} non-deterministic operator, where both branches are meant to converge~\cite{BucciarelliEhrhardManzonettoAPAL12,DiazcaroManzonettoPaganiLFCS13}. In our setting, the $+$ operator in \OurCalculus\ behaves like $\parallel$, and an extra operator ($\pi_A$) induces the non-deterministic choice.
The main point is that this construction arose naturally just by considering some of the isomorphisms between types as an equivalence relation.
In order to ensure that our system is must-convergent, we shall prove its strong normalisation, which is left for future research.

\subsection{Open questions and future research}
As mentioned in Section~\ref{sec:prob}, the calculus \OurCalculus\ has implicit scalars on it, which can convert this non-deterministic setting into a probabilistic one. The original motivation behind $\lambda_{\textrm{lin}}$~\cite{ArrighiDowekRTA08} and its {\em vectorial} type system~\cite{ArrighiDiazcaroValironDCM11} was to encode quantum computing on it. A projection depending on scalars could lead to a measurement operator in a future design---after other questions like deciding orthogonality~\cite{ValironQPL10} have been addressed in that setting. This is a promising future direction we are willing to take.

In order to follow such direction, a first step is to move to a call-by-value calculus, where $\ve r(\ve s+\ve t)\eq\ve r\ve s+\ve r\ve t$ (because a non-deterministic choice yet to make, is not considered to be a value). The reason to move to call-by-value is explained with the following example. Consider for instance the term $\ve{$\delta$}=\lambda x.xx$ applied to a sum $\ve r+\ve s$. In call-by-name it reduces to $(\ve r+\ve s)(\ve r+\ve s)$ while in a call-by-value strategy  ($\lambda_{\textrm{lin}}$) the same term reduces to $\ve{$\delta$}\ve r+\ve{$\delta$}\ve s$ first, and then to $\ve r\ve r+\ve s\ve s$. If seeking for a quantum interpretation, reducing $\ve{$\delta$}(\ve r+\ve s)$ into $(\ve r+\ve s)(\ve r+\ve s)$ is considered as the forbidden quantum operation of ``cloning''~\cite{WoottersZurekNATURE82}, while the alternative reduction to $\ve r\ve r+\ve s\ve s$ is seen as a ``copy'', or CNOT, a fundamental quantum operation~\cite{MonroeMeekhofKingItanoWinelandPRL95}.

In order to account for such an equivalence, $\ve r(\ve s+\ve t)\eq\ve r\ve s+\ve r\ve s$, we would need an equivalence at the type level such as
$(A\wedge B)\Rightarrow C\equiv (A\Rightarrow C)\wedge(B\Rightarrow C)$,
however it is clearly false. A workaround which have been used already in the vectorial type system~\cite{ArrighiDiazcaroValironDCM11} is to use the polymorphism instead of an equivalence. If $\ve r$ have type $\forall X.X\Rightarrow C_X$, then we can specialise $X$ to the needed argument. Indeed,
$\forall X.X\Rightarrow C_X$ entails both $A\Rightarrow C_A$ and $B\Rightarrow C_B$, which can latter be tied by a conjunction.
\medskip

Another prominent future work is to determine what is needed for the remaining isomorphisms (cf.~Figure~\ref{fig:iso}).
In a work by Garrigue and A\"it-Kaci~\cite{GarrigueAitkaciPOPL94}, the isomorphism $A\wedge B\equiv B\wedge A$ has been indirectly treated by combining it with currying: $(A\wedge B)\Rightarrow C\equiv A\Rightarrow B\Rightarrow C$ (cf.~isomorphism~(\ref{iso:currying}) of Figure~\ref{fig:iso}), from which it can be deduced the isomorphism $A\Rightarrow (B\Rightarrow C)\equiv B\Rightarrow (A\Rightarrow C)$ (cf.~isomorphism~(\ref{iso:ordering}) of Figure~\ref{fig:iso}). Their proposal is the selective $\lambda$-calculus, a calculus including labellings to identify which argument is being used at each time. Moreover, by considering the Church encoding of pairs, isomorphism (\ref{iso:ordering}) implies isomorphism (\ref{iso:comm}) (commutativity of $\wedge$). However their proposal is completely different to ours, and the non-determinism cannot be inferred from the selective $\lambda$-calculus.

\paragraph{Acknowledgements.} We would like to thank Fr\'ed\'eric Blanqui, Michele Pagani and Giulio Manzonetto for enlightening discussions.

\bibliographystyle{eptcs}
\bibliography{biblio}

\end{document}